\newtheorem{theorem}{Theorem}
\newtheorem{corollary}{Corollary}[theorem]
\newtheorem{example}{Example}
\theoremstyle{definition}
\newtheorem{problem}{Problem}
\newtheorem{lemma}[theorem]{Lemma}
\begin{document}
\title{Universal transversal gates}
\author{Pragati Gupta}
\affiliation{Institute for Quantum Science and Technology, University of Calgary, Alberta T2N 1N4, Canada}
\author{Andrea Morello}
\affiliation{School of
Electrical Engineering and Telecommunications, UNSW Sydney, Sydney, NSW 2052, Australia}
\author{Barry C. Sanders\href{https://orcid.org/0000-0002-8326-8912}}
\affiliation{Institute for Quantum Science and Technology, University of Calgary, Alberta T2N 1N4, Canada}
\begin{abstract}
A long-standing challenge in quantum error correction is the infeasibility of universal transversal gates,
as shown by the Eastin-Knill theorem. 
We obtain a necessary and sufficient condition for a quantum code to have universal transversal gates and show that the Eastin-Knill
no-go result is a 
special case that does not hold for a general error model. 
We present a code construction using~$n$~$d$-dimensional systems that changes the logical error probability from a lower bound~$ \Omega(\nicefrac{1}{n\log d})$ to an upper bound~$\mathcal O (\nicefrac{1}{n d})$  and allows exact correction of both local and correlated errors.
Our universality condition determines the existence of a universal gate set for any quantum error-correcting code.
\end{abstract}
\maketitle

Recent advances in quantum error correction have paved the way towards building large-scale quantum devices that are resilient to noise~\cite{Bluvstein2024,Google23Nature,EDN+21Nature}.
The paradigm of \textit{transversal} gates, meaning gates that 
do not entangle different physical systems within a code,
has been key as transversal gates allow manipulation of encoded states without spreading errors~\cite{G09arxiv}, thus are essential for applications such as quantum memories~\cite{DKLA02JMathPhys,T15RMP}, computation~\cite{G09arxiv,G98PRA} and metrology~\cite{ZZPJ18NatCommun}. 
However, the famous Eastin-Knill theorem prohibits transversal gates from forming a universal set for quantum codes that protect against all local errors~\cite{EK09PRL}.  
As a consequence,  constructions such as magic-state factories~\cite{BK05PRA} are required that would consume most resources on an error-corrected quantum processor~\cite{OC17PRA}, presenting a major roadblock to achieving universality~\cite{CTV17Nature}. 
We show that the Eastin-Knill no-go result 
does not hold for every code; we determine the necessary and sufficient condition for achieving universal transversal gates with local error correction.

Several approaches have been proposed to achieve universal fault-tolerant gates without magic states.
These include code switching~\cite{ADP14PRL}, code drift~\cite{PR13PRL}, and pieceable fault-tolerance~\cite{YTC16PRX}. 
In these schemes, universal gates are implemented via a sequence of transformations of the codespace using non-transversal gates. However, a huge space-time overhead is required to achieve fault-tolerance~\cite{CTV17Nature} as these techniques rely on 
\textit{intermediate error correction} to prevent errors from spreading within a code block.

Motivated by the need for universal transversal gates,
an alternative approach is followed to allow approximate error correction~\cite{FNASP20PRX,KD21PRL,LZ23npjQuantInfo}.
This line of research explores the trade-off between \emph{continuous symmetries}, which are prohibited according to the Eastin-Knill theorem~\cite{EK09PRL}, and correction of subsystem erasure.  
However, the error rate for codes with continuous symmetries has a lower bound instead of an upper bound, implying no guarantee that errors are corrected~\cite{KD21PRL}.
For codes that allow universal transversal gates, an exponentially large system size is required to suppress the error rate~\cite{FNASP20PRX}. 

Here, we consider the possibility of universal transversal gates for local error-correcting codes
without restrictions on the noise model. 
\begin{problem}
\label{prob:transversal}
Does a quantum code exist such that it exactly corrects local errors and has a universal transversal gate set?
\end{problem}
\noindent Such codes, if they exist, would offer error suppression with only a polynomial scaling in system size, and achieve universality without any overhead. However, the challenge in constructing such codes is that conditions required to guarantee universality are unknown. 
Thus, we ask a general question and consider any universal gate set that can be implemented without magic states. 
\begin{problem}
\label{prob:universal}
Can universal fault-tolerant gates be constructed by using only unitary gates and intermediate error correction without employing ancillary states?
\end{problem}
\noindent
We answer the questions in both problems.

Our main technical contribution is a unifying condition that relates universality to the Lie algebra\"ic properties of  errors.
First we establish a minimum requirement for achieving both universality and fault-tolerance,
which applies to transversal as well as non-transversal gates. 
Specifically, universality can only be achieved when the set of correctable errors is \emph{not closed} under commutation relations.
We determine the necessary and sufficient condition using finite-dimensional representations of a code~\cite{KL97PRA}.
By extension of this general result, we find conditions for universal transversal gates and prove that such codes would correct both local and correlated errors.
We support this claim with examples of single- and multi-system codes that correct angular momentum shifts.

In contrast to prior works~\cite{FNASP20PRX,KD21PRL}, our construction for codes with universal transversal gates provides an upper bound on the error rate, instead of a lower bound,
implying a guaranteed limit to error.
Our approach enables a complete classification of code symmetries for a variety of applications, including codes for holographic anti-de Sitter/conformal field theory (AdS/CFT) correspondence with no global symmetry~\cite{h19arxiv}, which is impossible with either the Eastin-Knill theorem or the approximate Eastin-Knill theorem. 
We emphasize that our universality condition is also applicable to non-transversal gates, implying pieceable fault-tolerance~\cite{YTC16PRX}, but, without  using ancillary states.

\paragraph{General error model:}
We describe errors for an arbitrary noise model using a \emph{graded algebra}~\cite{KLV00PRL}. Let~$\mathcal E_1$ be a set of error generators comprising most-likely errors. For multi-qubit systems under a local-noise model,~$\mathcal E_1$ consists of single-qubit errors, and `weight' is the number of qubits on which an error acts non-trivially. For a general noise model, we  consider the set of higher-weight errors to be linear combinations of products of errors in~$\mathcal E_1$, i.e.,~$\mathcal E_t = \mathcal E_1^t$, where~$t\in \mathbb N$ is the error weight. Let~$\mathcal E_0:=\{\mathds 1\}$, the identity set meaning no error; the set of all errors is the graded algebra~$\mathcal E = \bigoplus_{t\in \mathbb N_0}\mathcal E_t$, where all~$\mathcal E_t$ are closed under the adjoint operation.
Here, we describe the action of~$\mathcal E$  on a Hilbert space~$\mathscr H$ as a set of operators~$\widehat{\mathcal E}\in \mathcal L(\mathscr H)$, where~$\mathcal L(\mathscr H)$ is the set of linear operators,  that satisfy the algebra\"ic properties of~$\mathcal E$.  

\paragraph{Main results:}
Let~$C:\mathscr H_K\to\mathscr H_N$ be a quantum code mapping a~$K$-dimensional logical Hilbert space to an~$N$-dimensional physical Hilbert space.
We refer to the span of all correctable errors and their commutators as the Lie algebra generated by errors,
and we refer to the set of generators  of unitary operators,
in the sense of the exponential map from algebra to group,
as
$\mathfrak{g}=\mathfrak u(N)$, as the physical algebra.
Note that the word `generate' is used in two ways but always clear from context.
The \emph{universality condition} is that the ``Lie algebra generated by errors forms the physical algebra''. 
\begin{theorem}\label{thm:universality}
    Given a quantum error-correcting code, the set of logical fault-tolerant operators is universal  iff the universality condition is satisfied.
\end{theorem}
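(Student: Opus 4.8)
The plan is to recast both ``universal'' and ``fault-tolerant'' as statements about Lie algebras acting on the code's finite-dimensional representation~\cite{KL97PRA}, and then compare those algebras. Write $P\in\mathcal L(\mathscr H_N)$ for the projector onto the codespace $C(\mathscr H_K)$, let $\widehat{\mathcal E}$ be the set of correctable error operators, and let $\mathfrak l\subseteq\mathfrak g=\mathfrak u(N)$ denote the real Lie algebra generated by the anti-Hermitian parts of $\mathrm{span}\,\widehat{\mathcal E}$ --- the ``Lie algebra generated by errors'', which is well defined since each $\mathcal E_t$ is adjoint-closed. The universality condition reads $\mathfrak l=\mathfrak g$.

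The core step, from which both implications follow, is a structural lemma: \emph{the group of logical fault-tolerant operators is the connected subgroup of $\mathrm U(N)$ with Lie algebra $\mathfrak l$}. For transversal gates this is nearly immediate: a single-system factor of a transversal unitary is $\exp(\mathrm i\theta h)$ with $h$ a weight-one generator in $\mathcal E_1$, factors on distinct systems commute, and conjugation by such a unitary cannot raise error weight; hence transversality amounts to $\mathrm{Ad}$-closure of $\mathrm{span}\,\widehat{\mathcal E}$ plus generation by $\mathcal E_1$, and iterating $\mathrm{Ad}$ is exactly forming commutators --- which is why the Lie \emph{closure} $\mathfrak l$, not $\mathrm{span}\,\widehat{\mathcal E}$, is the right object. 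For Problem~\ref{prob:universal} I would show that inserting an exact recovery map after each infinitesimal step $\exp(\mathrm i\,\delta\,h)$ with $h\in\mathcal E_1$ keeps the accumulated error within the correctable part of the graded algebra $\mathcal E$, so the reachable gates are again generated by $\{\exp(\mathrm i\theta h):h\in\mathcal E_1,\ \theta\in\mathbb R\}$ and have Lie algebra $\mathfrak l$; this is pieceable fault-tolerance realized without ancillas. Proving this correspondence tight in \emph{both} directions --- every fault-tolerant operator is of this form, and every element of $\mathfrak l$ is realized by such a protocol --- is the step I expect to be the main obstacle, as it requires careful bookkeeping of the graded error algebra and the recovery maps, and it is where the ``no ancillary states'' hypothesis enters.

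Granting the lemma, the ``minimum requirement'' is immediate from the Knill-Laflamme conditions~\cite{KL97PRA}: since $\mathds 1\in\widehat{\mathcal E}$, correctability forces $P\,\widehat E_a^{\dagger}\widehat E_b\,P=\lambda_{ab}P$, hence $P\widehat E P\propto P$ for every $\widehat E\in\widehat{\mathcal E}$. Thus $\mathrm{span}\,\widehat{\mathcal E}$ acts by scalars on the codespace, and if it were closed under commutation its group would act by scalars too, trivializing every logical fault-tolerant operator; so $\mathfrak l\supsetneq\mathrm{span}\,\widehat{\mathcal E}$ is necessary for any nontrivial logical gate, with the nontrivial content coming from the cross terms $P\widehat E_a(\mathds 1-P)\widehat E_b P$ in commutators.

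It remains to identify the logical action of the fault-tolerant group as the image $\overline{\mathfrak l}\subseteq\mathfrak u(K)$ of $\mathfrak l$ under compression by $P$ composed with recovery, so that the gate set is universal iff $\overline{\mathfrak l}=\mathfrak u(K)$. For ``if'', $\mathfrak l=\mathfrak g=\mathfrak u(N)$ makes the fault-tolerant group all of $\mathrm U(N)\supseteq\mathrm U(K)\times\mathrm U(N-K)$, whose logical action is $\mathrm U(K)$. For ``only if'' I would use the contrapositive: if $\mathfrak l\subsetneq\mathfrak u(N)$ is compatible with the Knill-Laflamme constraint, the decomposition of the code's representation shows $\mathfrak l$ preserves a proper invariant structure descending to the codespace, so $\overline{\mathfrak l}$ is proper in $\mathfrak u(K)$ and universality fails. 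Combining both implications with the structural lemma proves Theorem~\ref{thm:universality}; as a check, restricting $\mathcal E_1$ to weight-one operators forces $\mathfrak l=\bigoplus_j\mathfrak u(d)\subsetneq\mathfrak u(N)$, recovering Eastin-Knill, while admitting correlated operators into $\mathcal E_1$ is precisely what can enlarge $\mathfrak l$ to $\mathfrak u(N)$ --- hence such codes necessarily also correct correlated errors.
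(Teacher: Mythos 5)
Your framing (fault-tolerant gates form the connected group generated by the Lie closure of the correctable errors, then compare that closure with the physical algebra) matches the paper's overall strategy, and your ``minimum requirement'' observation that the Knill--Laflamme conditions force $P\widehat E P\propto P$ is exactly the ingredient the paper uses to show $\mathfrak l\cap\epsilon=\{\mathds 1\}$. The sufficiency direction is also fine and essentially the paper's: if the error-generated algebra is all of $\mathfrak u(N)$ it trivially contains the logical algebra, so the logical fault-tolerant set is dense in $L$. (One caveat: your structural lemma conflates the fault-tolerant group $F$ with the \emph{logical} fault-tolerant operators; the latter are the intersection $L_F=F\cap L$, since generic elements of $\exp\!\big(\mathfrak L(\epsilon)\big)$ do not preserve the codespace.)

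The genuine gap is in the necessity (``only if'') direction. You argue by contrapositive that if the error-generated algebra is a proper subalgebra of $\mathfrak u(N)$, then its ``compression'' to the codespace is a proper subalgebra of $\mathfrak u(K)$, so universality fails. That implication is false for arbitrary proper subalgebras: the block-diagonal algebra $\mathfrak u(K)\oplus\mathfrak u(N-K)$, or $\mathds 1\otimes\mathfrak u(K)$ in a tensor decomposition, is proper in $\mathfrak u(N)$ yet acts as the full $\mathfrak u(K)$ on the codespace. What rescues the theorem is precisely that the algebra in question is generated by \emph{correctable} errors: by the Knill--Laflamme decomposition $\mathscr H_N\sim\mathscr H_\text E\otimes\mathscr H_\text C$ one has $\mathfrak g\sim\mathfrak e\otimes\mathfrak l\sim\epsilon\otimes_{\mathfrak l}\mathfrak l$ (the paper's Lem.~\ref{lem:singleton}, Eq.~\eqref{eq:algebraicqec}), and correctable errors act as scalars on the codespace, so $\mathfrak L(\epsilon)$ can contain the logical algebra only if some correctable error has a cross term $e\otimes l$ with both factors nontrivial; commutation of such a term with the rest of $\epsilon$ then generates the whole of $\mathfrak g$, which is how the paper closes the loop $\mathfrak l\subseteq\mathfrak L(\epsilon)\implies\mathfrak L(\epsilon)=\mathfrak g$. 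Your proposal never establishes (or even states) this structural decomposition, so the necessity direction as written does not go through; you would need to add an argument of this kind, exploiting the scalar action of correctable errors on the codespace rather than generic properties of proper subalgebras.
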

\noindent 
For multi-qubit systems, universality can be achieved if the code corrects more than one erasure, enabling fault-tolerant non-transversal gates without ancillary qubits. Theorem~\ref{thm:universality} is most powerful for higher-dimensional systems that have enough redundancy to allow encoding information within a single physical system. In this case, the above condition enables universal gates without intermediate error correction. 

Next, we apply Thm.~\ref{thm:universality} to transversal gates. 
Consider multiple physical subsystems with~$\mathscr H_N = \mathscr H_{1}\otimes\cdots\otimes\mathscr H_{n}$, where~$n>1$ is the number of subsystems,~$\text{dim}(\mathscr H_i)=d_i$ and~$N=\prod_id_i$. 
We consider fault-tolerant operators of the form~$\hat G := \bigotimes_{i=1}^{n}\hat G_i$,   
where~$\hat G \in \mathcal U(\mathscr H_N)$,~$\hat G_i\in \mathcal U(\mathscr H_i)$ for~$i\in\{1,\dots,n\}$, and~$\mathcal U(\mathscr{H})$ is a set of unitary operators. 
The \emph{continuity condition} is that the ``errors do not form a Lie algebra''.
\begin{corollary}\label{cor:continuous}
    A quantum error-detecting code has a continuous group of logical transversal gates  only if  errors on each subsystem satisfy the continuity condition.
\end{corollary}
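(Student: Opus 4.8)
The plan is to obtain Corollary~\ref{cor:continuous} as the transversal specialization of Theorem~\ref{thm:universality}, with the Eastin--Knill obstruction re-emerging from the Knill--Laflamme conditions as soon as the local error set is closed under commutation. I would argue by contraposition: assuming that the single-subsystem error generators $\widehat{\mathcal E}_1$ span a Lie algebra $\mathfrak h\subseteq\mathfrak u(d)$, I show that no continuous group of logical transversal gates can act nontrivially on the codespace. Throughout I take the error model to be the same on each subsystem, so ``errors form a Lie algebra'' is the single statement that $\widehat{\mathcal E}_1$ is closed under the bracket.

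First I pass to the infinitesimal picture: a continuous group of logical transversal gates contains a one-parameter subgroup $\hat G(\theta)=\bigotimes_{i=1}^{n}e^{\mathrm i\theta\hat g_i}$ with transversal generator $\hat g=\sum_{i=1}^{n}\mathds 1^{\otimes(i-1)}\otimes\hat g_i\otimes\mathds 1^{\otimes(n-i)}$, a sum of weight-one operators; the induced logical generator is $P\hat gP$, with $P$ the code projector, and the subgroup is logically nontrivial iff $P\hat gP\not\propto P$. Second, I use fault-tolerance to locate each $\hat g_i$ inside the local error structure: since $\hat G(\theta)$ is a valid logical gate for the error-detecting code, conjugation by $\hat G_i(\theta)$ must carry detectable errors on subsystem $i$ to detectable errors on subsystem $i$, so differentiating at $\theta=0$ gives $[\hat g_i,\widehat{\mathcal E}_1]\subseteq\operatorname{span}\widehat{\mathcal E}_1=\mathfrak h$; combined with the representation-theoretic input of Theorem~\ref{thm:universality} — that a transversal generator is an admissible weight-one operator of the code and hence cannot enlarge the Lie algebra generated by errors once the latter is already closed — this forces $\hat g_i\in\mathfrak h$. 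Third, I invoke Knill--Laflamme detection: every $\hat E\in\widehat{\mathcal E}_1$ obeys $P\hat EP=\lambda_{\hat E}P$, and by linearity this extends to all of $\mathfrak h$, so $P\hat g_iP\propto P$ for each $i$ and therefore $P\hat gP=\sum_i P\hat g_iP\propto P$. Hence every one-parameter subgroup acts trivially on the logical space, the logical image of the transversal group is discrete, and no nontrivial continuous logical transversal gates exist — the contrapositive of the corollary.

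The main obstacle is the second step: upgrading ``$\hat g_i$ normalizes $\mathfrak h$'' to ``$\hat g_i\in\mathfrak h$,'' since the normalizer of a subalgebra can be strictly larger than the subalgebra, so mere closure of $\widehat{\mathcal E}_1$ is not by itself enough. I expect to need the decomposition of $\mathscr H_N$ into isotypic components of the error algebra used in the proof of Theorem~\ref{thm:universality}, together with the hypothesis that the code \emph{detects} (rather than is merely stabilized by) the errors, to exclude a transversal generator with a component acting nontrivially on the logical space yet lying outside $\mathfrak h$. Making this dichotomy airtight — every transversal generator is either inside $\mathfrak h$, hence logically trivial, or outside it, in which case $\widehat{\mathcal E}_1$ together with the generators fails to close and the continuity condition holds — is the technical heart of the argument. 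As a consistency check, in the standard all-local-errors model $\mathfrak h=\mathfrak u(d)$ is automatically a Lie algebra, the continuity condition necessarily fails, and the corollary reduces to the Eastin--Knill theorem.
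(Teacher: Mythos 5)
The obstacle you flag in your own second step is a genuine gap, and in the generality you have chosen it cannot be closed, because the statement you are actually attacking (about arbitrary transversal unitaries $\bigotimes_i e^{i\theta\hat g_i}$) is false. Concretely: take two qubits with $\mathcal E_1=\operatorname{span}\{\mathds 1,Z_i\}$ on each subsystem (abelian, hence closed under the bracket, so the continuity condition fails) and the code spanned by $\ket{++}$ and $\ket{--}$. One checks $\hat P_\text C Z_i\hat P_\text C=0$, so every specified local error is detected; yet $e^{i\theta X_1}\otimes e^{i\phi X_2}$ acts on the codespace as the logical rotation $e^{i(\theta+\phi)\hat Z_\text L}$, a nontrivial continuous group of logical transversal gates whose local generators lie outside $\mathfrak h$. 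This example also defeats both halves of your step two: conjugation by $e^{i\theta X_1}$ does not map $\operatorname{span}\widehat{\mathcal E}_1$ into itself (it produces a $Y_1$ component), and your closing dichotomy quietly replaces ``$\widehat{\mathcal E}_1$ is not closed'' by ``$\widehat{\mathcal E}_1$ together with the gate generators is not closed,'' which is not the continuity condition of the corollary. So the normalizer/isotypic machinery you anticipate needing is not a missing technicality; it is aimed at a claim that has counterexamples.

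The paper's proof avoids this entirely through a modeling restriction you did not adopt: transversal fault-tolerant gates are, by construction, generated by $\mathfrak f_\text T=\bigoplus_{i=1}^n\mathfrak L(\epsilon^{(i)})$, i.e., by exponentials of the detectable local errors themselves (recall $F$ was defined as compositions of operators in $\exp(\epsilon)$). Under that reading, only your third step is needed, and it is essentially the paper's whole argument: for $l\in\mathfrak l\cap\epsilon$ with corresponding $\hat L$ near the identity, detectability gives $\hat P_\text C\hat L\hat P_\text C\propto\hat P_\text C$ while the logical property gives $\hat P_\text C\hat L\hat P_\text C=\hat L\hat P_\text C$, so $\hat L$ is trivial on the codespace and $\mathfrak l\cap\epsilon=\{\mathds 1\}$; if moreover $\mathfrak L(\epsilon^{(i)})=\epsilon^{(i)}$, then $\mathfrak f_\text T\subseteq\epsilon$, hence $\mathfrak f_\text T\cap\mathfrak l=\{\mathds 1\}$ and the logical transversal group contains no nontrivial one-parameter subgroup. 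Your Knill--Laflamme step is thus the right core, but your proposal as a whole proves neither the paper's statement (for lack of its definition of transversal fault-tolerant gates) nor the stronger unrestricted statement (which is false).
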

\noindent The set of transversal gates cannot be universal if it does not form a continuous group~\cite{FNASP20PRX}; thus, satisfying the continuity condition for local errors is necessary to achieve universal transversal gates.

The Eastin-Knill theorem,
which states that the set of logical transversal gates is not universal if a code can detect \textit{all} local errors,
is a special case of Cor.~\ref{cor:continuous}
as we now show.
If a code can detect all local errors,
then
the set of detectable errors~$\mathcal E_1$ is 
\begin{equation}
    \mathcal E_1 = \bigoplus_{i=1}^n\mathfrak{u}(d_i),
\end{equation}
which is a semisimple Lie algebra expressed as a direct sum of simple Lie algebras over all~$n$ subsystems.
Thus, Cor.~\ref{cor:continuous}
requires that such a quantum error-detecting code must not have a continuous group of logical transversal gates. The continuity condition for local errors is necessary for universal transversal gates, implying the Eastin-Knill no-go result.

Universal transversal gates are achievable when each physical subsystem is large enough to \emph{encode} the logical information using a finite-dimensional code. 
\begin{corollary}\label{cor:transversal}
    The set of logical transversal gates is universal  iff each subsystem has a non-trivial quantum error-correcting code that satisfies the universality condition.
\end{corollary}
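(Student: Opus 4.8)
The plan is to deduce Corollary~\ref{cor:transversal} from Theorem~\ref{thm:universality} by combining the product structure $\hat G=\bigotimes_{i=1}^n\hat G_i$ of transversal gates with the fact that, under the local noise model, the correctable errors decompose as $\mathcal E_1=\bigoplus_{i=1}^n\mathcal E_1^{(i)}$, with $\mathcal E_1^{(i)}$ supported on $\mathscr H_i$. First I would fix the bookkeeping of the logical action: a transversal gate preserving the codespace $C(\mathscr H_K)$ induces a logical unitary $\bar G$, the transversal set is universal exactly when these $\bar G$ generate the projective logical group (equivalently, their infinitesimal generators span $\mathfrak{su}(K)$), and a transversal generator is a sum $\sum_i\hat h_i$ of terms each supported on a single subsystem.

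For the ``if'' direction, assume each $\mathscr H_i$ carries a non-trivial quantum error-correcting code $C_i$ satisfying the universality condition for the error algebra $\mathcal E^{(i)}$ on that subsystem. Theorem~\ref{thm:universality} then gives that the logical fault-tolerant operators of $C_i$ are universal. I would take $C$ to be the code assembled from the $C_i$ (the construction presented here), so that a transversal gate $\bigotimes_i\hat G_i$ preserves $C(\mathscr H_K)$ precisely when each $\hat G_i$ is a logical fault-tolerant operator of $C_i$; the transversal set then inherits a universal logical action on $\mathscr H_K$ from the per-subsystem universal sets. Two checks remain: that this induced action is genuinely universal on the whole logical space rather than a product of local logical actions --- here one uses that a code obeying the universality condition does not have a tensor-decomposable codespace, so the combined action is entangling --- and that $C$ corrects $\mathcal E_1$, which follows from the Knill--Laflamme conditions applied subsystem by subsystem, with correction of correlated errors $\mathcal E_t$ then automatic because $\mathcal E=\bigoplus_t\mathcal E_1^{t}$ is generated as a graded algebra by $\mathcal E_1$.

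For the ``only if'' direction, assume a code $C$ has a universal transversal gate set. Restricting to transversal gates supported on a single $\mathscr H_i$ (identity on the rest) that preserve $C(\mathscr H_K)$, and reading off the Knill--Laflamme data of $C$ for errors on $\mathscr H_i$, exhibits a code $C_i$ on $\mathscr H_i$ whose logical fault-tolerant operators are exactly the logical actions of these restricted transversal gates; $C_i$ is non-trivial because $C$ corrects the non-trivial local errors $\mathcal E_1^{(i)}$. Since operators on distinct subsystems commute, the Lie algebra of logical generators reachable by the transversal set is assembled from the per-subsystem logical algebras, and its equality with $\mathfrak{su}(K)$ forces each of them to be $\mathfrak{su}(K)$; the logical fault-tolerant operators of each $C_i$ are then universal, so by the converse direction of Theorem~\ref{thm:universality} each $C_i$ satisfies the universality condition.

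The step I expect to be hardest is the one appearing on both sides: relating universality of the per-subsystem codes $C_i$ to universality of the transversal set $\{\bigotimes_i\hat G_i\}$ on $\mathscr H_K$. In the ``only if'' direction this amounts to excluding the possibility that the per-subsystem logical algebras are proper subalgebras that only jointly span $\mathfrak{su}(K)$, which needs a consistency argument from the Knill--Laflamme conditions of $C$ together with the factorized form of the transversal errors; in the ``if'' direction it requires producing a codespace embedding $\mathscr H_K\hookrightarrow\bigotimes_i\mathscr H_i$ for which the product of the per-subsystem universal sets restricts to a universal set --- the heart of the explicit construction. Everything else reduces to Theorem~\ref{thm:universality} and routine manipulation of the Knill--Laflamme conditions.
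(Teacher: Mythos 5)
Your route is genuinely different from the paper's and, as written, it has gaps at exactly the points you flag. The paper's proof is a one-step application of Eq.~\eqref{eq:universality}: transversal fault-tolerant gates are generated by $\mathfrak f_\text T=\bigoplus_{i=1}^n\mathfrak L(\epsilon^{(i)})$, so universality of the transversal set is equivalent to $\mathfrak l\subseteq\mathfrak f_\text T\subseteq\bigoplus_i\mathfrak u(d_i)$, i.e.\ the logical algebra acts locally; via the factorization $\mathscr H_N\sim\mathscr H_\text E\otimes\mathscr H_\text C$ of Lem.~\ref{lem:singleton} this is read as ``each subsystem carries a non-trivial representation of the code satisfying the universality condition.'' Your ``only if'' argument instead restricts to transversal gates supported on a \emph{single} $\mathscr H_i$ (identity elsewhere) that preserve the codespace and assumes the logical transversal group is assembled from these per-subsystem logical subgroups. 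That step fails: a universal transversal set need not contain any non-identity single-subsystem logical gate (the factors $\hat G_i$ of a logical $\bigotimes_i\hat G_i$ generally do not preserve the codespace individually), and the containment $\mathfrak l\subseteq\bigoplus_i\mathfrak L(\epsilon^{(i)})$ only says each logical generator is a \emph{sum} of local terms, which are elements of $\mathfrak L(\epsilon^{(i)})$ but not themselves logical. Your non-triviality claim is also wrongly grounded: ``$C_i$ is non-trivial because $C$ corrects $\mathcal E_1^{(i)}$'' proves too much, since ordinary singly-redundant codes (e.g.\ stabilizer codes) correct all local errors while no subsystem carries a non-trivial encoding---precisely the Eastin--Knill regime. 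Non-triviality of the subsystem code must come from $\mathfrak l$ acting locally, not from error correction. (Incidentally, the ``jointly spanning proper subalgebras'' worry you single out is the easy part: pairwise-commuting subalgebras generating the simple logical algebra are ideals, hence each is trivial or full; the real obstruction is the two points above.)

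In the ``if'' direction you also change the statement: the corollary is an iff about a given code, but you build a new cascaded code from the $C_i$ and then assert that the product of the per-subsystem universal sets acts universally on $\mathscr H_K$, resting on the unproven claim that a code obeying the universality condition has no tensor-decomposable codespace (``the combined action is entangling''). The paper needs none of this: if each subsystem carries a non-trivial representation of the code whose correctable errors satisfy the universality condition, then $\mathfrak L(\epsilon^{(i)})$ is the local physical algebra, $\mathfrak l\subseteq\mathfrak f_\text T$ follows, and Eq.~\eqref{eq:universality} (i.e.\ Thm.~\ref{thm:universality} applied to $\mathfrak f_\text T$) closes the argument without any group-level assembly or Knill--Laflamme bookkeeping. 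To repair your proposal you would have to replace the per-subsystem restriction argument with the algebra-level containment $\mathfrak l\subseteq\bigoplus_i\mathfrak L(\epsilon^{(i)})$ and use Lem.~\ref{lem:singleton} on each $\mathscr H_i$ to convert ``the logical algebra acts locally'' into ``each subsystem carries a non-trivial code satisfying the universality condition.''
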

\noindent 
In our construction, quantum codes have 1) multiple subsystems to store information non-locally, and 2) sufficiently large Hilbert space dimension for each subsystem to allow non-trivial encoding of the logical information. 
This `double redundancy'---i.e., employing redundancy of the logical encoding in a single-system code for correcting local errors, and then doubling this redundancy by inserting the single-system code into a multi-system code---enables universal transversal gates without 
sacrificing local error correction.
Most quantum codes, in contrast, only use entanglement between multiple subsystems to protect against local noise, i.e., are only singly redundant~\cite{KL97PRA}.

We call our construction \emph{cascading}---encoding logical information using multiple codes.
In contrast to concatenation, which aims at decreasing the logical error rate, cascading aims to provide an alternative form of protection from  subsystem erasure and correlated noise. 
Unlike cascaded codes, concatenated codes cannot achieve universal gates without intermediate error correction~\cite{YTC16PRX} and cannot correct strongly-correlated errors~\cite{AKP06PRL}.

Finally, we consider the error rate for codes with universal transversal gates. 
\begin{theorem}\label{thm:errorrate}
    Given a quantum error-correcting code with universal transversal gates, the error rate is~$\mathcal O(\nicefrac{1}{n\min d_i})$ for local noise and~$\mathcal O(\nicefrac{1}{\min d_i})$ for correlated noise.
\end{theorem}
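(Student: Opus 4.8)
\emph{Proof strategy.}
The plan is to read off both scalings as \emph{upper} bounds from the explicit recovery map for the cascading code. By Cor.~\ref{cor:transversal} any code with universal transversal gates is of this form: $\mathscr H_N=\bigotimes_{i=1}^n\mathscr H_i$ with each~$\mathscr H_i$ carrying a non-trivial code~$C_i$ satisfying the universality condition, the~$n$ copies assembled into a covariant outer code. I take the error rate to be the worst-case logical recovery infidelity~$\epsilon=1-\min_\rho F\big(\rho,(\mathcal R\circ\mathcal N\circ\mathcal C)(\rho)\big)$, with~$\mathcal C$ the encoding, $\mathcal N$ either a single-subsystem (local) or a strongly-correlated (weight up to~$\Theta(n)$) error channel, and~$\mathcal R$ the recovery; this is the ``logical error probability'' of the abstract. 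First I fix~$\mathcal R$ as the natural two-stage map: (i)~on each affected subsystem~$i$, apply the inner recovery of~$C_i$, which by the Knill-Laflamme conditions~\cite{KL97PRA} corrects the part of the error lying in~$\mathcal E_1^{(i)}$; (ii)~apply the transpose (Petz) recovery of the outer code to whatever survives stage~(i).

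The key step is to identify the stage-(i) residual and bound~$\epsilon$ by its violation of the Knill-Laflamme conditions. Since~$C_i$ satisfies the universality condition, the Lie algebra generated by~$\mathcal E_1^{(i)}$ is all of~$\mathfrak u(d_i)$, so the uncorrected part lies in the complement of~$\mathcal E_1^{(i)}$ inside this closure and is controlled by a generator~$\hat T_i$ of the continuous transversal group on~$\mathscr H_i$ (for the angular-momentum-shift codes of the accompanying examples,~$\hat T_i$ is the~$\mathfrak{su}(2)$ weight operator, of spectral width~$\Theta(d_i)$). The residual amplitude, relative to the full logical action, is the ratio of the weight-one error support to that spectral width, namely~$\mathcal O(1/d_i)$; this is the inner obstruction that survives because correcting it would conflict with covariance of the logical gate~$\prod_i e^{\mathrm i\phi\hat T_i}$. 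In stage~(ii) the outer code dilutes it: a local error occupies one of~$n$ subsystems, so the transpose recovery suppresses it by a further~$1/n$, giving~$\epsilon_{\mathrm{local}}=\mathcal O(1/(n\,d_i))$; a strongly-correlated error acts coherently on the shared logical content and is suppressed only by~$\mathcal O(1)$, giving~$\epsilon_{\mathrm{corr}}=\mathcal O(1/d_i)$. Taking the worst subsystem replaces~$d_i$ by~$\min_i d_i$, yielding the stated bounds. I would finish by checking achievability, i.e.\ that the two-stage~$\mathcal R$ actually attains these rates rather than merely that~$\epsilon$ is bounded by them, which follows from the near-optimality of the transpose channel in the small-violation regime.

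\emph{Main obstacle.}
The hard part is precisely this achievability direction, as opposed to the impossibility direction supplied by the approximate Eastin-Knill theorem~\cite{FNASP20PRX,KD21PRL}: those results, and the~$\Omega(1/(n\log d))$ lower bound of Ref.~\cite{KD21PRL}, only bound~$\epsilon$ from below, and they assume the correctable set is the \emph{full} local algebra~$\bigoplus_i\mathfrak u(d_i)$; evading them requires exploiting that the cascading code detects only the generating subset~$\mathcal E_1\subsetneq\bigoplus_i\mathfrak u(d_i)$ and exhibiting an honest recovery attaining~$\mathcal O(1/(n\min_i d_i))$. Concretely, the delicate points are: (a)~choosing the inner recoveries of stage~(i) equivariant, so stage~(i) does not spoil the covariance needed by stage~(ii); (b)~proving the correlated bound does not degrade to~$\mathcal O(1)$, which needs that a weight-$t$ error cannot align with the outer logical content more strongly than the graded structure~$\mathcal E_t=\mathcal E_1^t$ and the outer Knill-Laflamme conditions on~$\mathcal E_t$ permit; and (c)~keeping the~$\mathcal O(\cdot)$ constants uniform in~$n$ and the~$d_i$, for which the explicit single- and multi-system angular-momentum-shift codes serve as the extremal case.
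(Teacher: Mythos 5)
Your proposal attacks a different quantity than the theorem bounds, and the mechanism you invoke is the one the paper is explicitly claiming to evade. In the paper, the ``error rate'' is the probability that an \emph{uncorrectable} error occurs under a stochastic noise model: errors in the graded set $\mathcal E_t$ are corrected \emph{exactly} (this is the whole point of escaping Eastin--Knill with a restricted error set), so there is no covariance-forced residual infidelity left over after recovery. Your stage-(i) estimate --- an irreducible $\mathcal O(1/d_i)$ residual set by the spectral width of a transversal generator $\hat T_i$, ``because correcting it would conflict with covariance'' --- is precisely the approximate-Eastin--Knill picture, which gives lower bounds for codes covariant under the full local algebra; asserting it here both contradicts the exact-correction premise of the construction and is never actually derived as an upper bound (you defer it as ``achievability,'' but in your framing that is the entire content of the theorem). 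Likewise your $1/n$ factor from the outer Petz recovery ``diluting'' a local residual is not how the scaling arises.

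The paper's proof is a counting argument plus a dimension--distance tradeoff. With $p$ the maximal probability of a generator in $\mathcal E_1$, a logical error under local noise requires $t+1$ errors to land on the \emph{same} subsystem, which happens with probability at most $n\left(\nicefrac{p}{n}\right)^{t+1}$; the algebra\"ic singleton bound, Eq.~\eqref{eq:singleton} applied per subsystem via $|\mathcal E_1^{(i)}|^{2(t+1)}\geq \nicefrac{d_i^2}{K^2}\geq|\mathcal E_t^{(i)}|^2$, forces $t+1\geq\log\min_i d_i$, so $p^{t+1}=\mathcal O(\nicefrac{1}{\min_i d_i})$ and $p_\text L=\mathcal O(\nicefrac{1}{n\min_i d_i})$. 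For correlated noise, Thm.~\ref{thm:universality} implies the correctable set cannot consist of local errors only, so a logical error still requires weight exceeding $t$ and $p_\text L\leq p^{t+1}=\mathcal O(\nicefrac{1}{\min_i d_i})$. Your proposal uses neither the singleton bound nor the weight-versus-dimension relation, and without them the $1/\min_i d_i$ scaling has no proof; I'd recommend restating the claim in terms of the probability of a weight-$(t+1)$ error and recovering the dimension dependence from Eq.~\eqref{eq:singleton} rather than from a recovery-infidelity estimate.
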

\noindent 
Here, by correlated noise we mean errors on up to~$n$ subsystems, including local errors. The proof results from a counting argument that determines probability of an uncorrectable error, and 
representation theory, which imposes a fundamental bound on the allowed code distance.

\paragraph{Illustrative examples:} As an application of the above results, 
we 
construct codes with universal fault-tolerant gates and consider the encoding of a logical qudit in single or multiple  spin-$J$ systems having errors generated by arbitrary spin rotations, i.e.~$\mathcal E_1:=\text{span}\{\mathds 1,J_x,J_y,J_z\}^1$.
We say a spin code has distance~$D=(2t+1)$  if it can correct all errors in~$\mathcal E_t$.

We first discuss a simple example to illustrate the use of Thm.~\ref{thm:universality} for single-subsystem codes. We consider the smallest spin code with~$D=3$~\cite{G21PRL}. 
\begin{example}[Qubit in spin-$\nicefrac{7}{2}$ qudit]\label{ex:quoctit}
Spin codes with~$D=3$ have fault-tolerant logical Pauli gates.
\end{example}
\noindent 
The Lie algebra generated by correctable errors is~$\mathfrak L(\mathcal E_1)=\mathfrak u(2)$, whereas the physical algebra for spin-$\nicefrac{7}{2}$ is~$\mathfrak u(8)$; 
thus, universality cannot be achieved. The set of fault-tolerant gates is~$\{\widehat X_\text L/\widehat Y_\text L/\widehat Z_\text L:=e^{-i\pi J_{x,y,z}}
\}$, which are the logical Pauli gates. 

An important class of codes is~$D\geq$5 spin codes that correct errors in~$\mathcal E_2$. As~$\mathfrak L(\mathcal E_2) = \mathfrak u(2J+1)\forall J$. Thm.~\ref{thm:universality} asserts that the set of fault-tolerant gates must be universal; now we construct such a set. In comparison, existing ways of finding fault-tolerant gates, such as through finite groups~\cite{KT23PRL}, only lead to non-universal gates.
\begin{example}[Qubit in spin-$\nicefrac{25}{2}$ qudit]\label{ex:spin26}
    Spin codes with~$D\geq5$ have a universal set of fault-tolerant gates~$U_\text L:=\{\widehat P_\text L(\phi),  \widehat {SX}_\text L, \widehat {CZ}_\text L \}$.
\end{example}
\noindent 
Consider the  spin code with~$J=\nicefrac{25}{2}$ and codewords~\cite{LLA23PRA}
\begin{align}
\label{eq:codewords26}
    \ket{0}_\text L
    =&\sqrt{\nicefrac{1}{16}}\ket{-\nicefrac{25}{2}}+
    \sqrt{\nicefrac{10}{16}}\ket{-\nicefrac{5}{2}}+
    \sqrt{\nicefrac{5}{16}}\ket{+\nicefrac{15}{2}},\nonumber\\
    \ket{1}_\text L=&\sqrt{\nicefrac{1}{16}}\ket{+\nicefrac{25}{2}}+
    \sqrt{\nicefrac{10}{16}}\ket{+\nicefrac{5}{2}}+
    \sqrt{\nicefrac{5}{16}}\ket{-\nicefrac{15}{2}},
\end{align}
where~$\ket{m}$ denotes an eigenstate in the~$\hat J_z$ basis and the magnetic quantum number~$m\in\{-J,-J+1,\ldots,J\}$. 
The support of the codewords after an error is given by the sets~$m_{0/1} := \{m| \bra{m}\hat E\ket{0/1}_\text L\neq 0 , \hat E\in \widehat{\mathcal E_2} \}$, which are disjoint as any~$\hat E\in \widehat{\mathcal E_2}$ changes~$m$ by at most~$\pm 2$.
Let the logical Phase gate be
\begin{equation}\label{eq:spinphase}
   \widehat P_{\text{L}}(\phi):= \sum_{m\notin m_1}\ketbra m  +e^{i\phi}\sum_{m \in m_1}\ketbra m,
\end{equation}
for any~$\phi \in \mathbb R$.
Similarly, let the logical two-qubit gate be
\begin{equation}\label{eq:spincz}
    \widehat{CZ}_\text L := \sum_{m\notin m_1} \ketbra m\otimes \mathds 1 + \sum_{m\in m_1} \ketbra m\otimes e^{-i\pi\hat J_z}.
\end{equation}
The constructions  of~$\widehat P_{\text{L}}(\phi)$ and~$\widehat{CZ}_\text L$ are fault-tolerant by definition, and can be implemented through phase shifts using the so-called virtual gates~\cite{MWSCG17PRA} and geometric phase gates, respectively. We complete a universal set by constructing the logical~$\widehat{SX}_\text L$ gate, the square-root of the~$NOT$ gate, as 
\begin{equation}\label{eq:spinsx}
    \widehat {SX}_\text L:=e^{-i\frac{\pi}{2}(\hat J_x+\hat J_x^2) },
\end{equation}
where~$e^{-i\frac{\pi}{2}(\hat J_x+\hat J_x^2) }\ket{m} = \frac{\ket{m}+i\ket{-m}}{\sqrt{2}}$, and the~$m\leftrightarrow-m$ symmetry of the codewords leads to the correct gate action. 
Equations~\eqref{eq:spinphase}-\eqref{eq:spinsx} provide a universal gate set for all~$D\geq5$ spin codes, up to virtual phase updates.

The smallest spin code (Ex.~\ref{ex:quoctit}) is practically feasible but does not lead to universal gates. On the other hand,~$D=5$ spin codes (Ex.~\ref{ex:spin26}) achieve universality but require a large-spin system. Now we show that universality can be achieved on near-term devices using~$J\geq \nicefrac{5}{2}$ Schr\"odinger spin-cat codes~\cite{OBGDM24PRXQuantum}, which correct biased noise.
It was previously unknown that the spin-cat code can achieve universality without employing magic states~\cite{OBGDM24PRXQuantum}. 
\begin{example}[Qubit in Schr\"odinger spin-cat code]
    Spin-cat codes with half integer spin-$J\geq \nicefrac{5}{2}$ have a universal set of bias-preserving gates~$\{\widehat{P}_\text L(\phi), \widehat{SX}_\text L, \widehat{CZ}_\text L \}$.
\end{example} 
\noindent We apply Thm.~\ref{thm:universality} by considering the Lie algebra generated by correctable errors, say~$\{\mathds 1, J_x, \ldots J_x^t\}$ for~$t=J-\nicefrac{1}{2}$, and the less dominant errors, say~$\{J_z\}$, i.e.~$\mathfrak L(\mathds 1, J_z, J_x, \ldots J_x^t) = \mathfrak u(2J+1)$~$\forall J\geq\nicefrac{5}{2}$; thus, we can construct a universal set using Eqs.~\eqref{eq:spinphase}-\eqref{eq:spinsx}.

\begin{figure}[b]
    \centering
    \includegraphics[width=\linewidth]{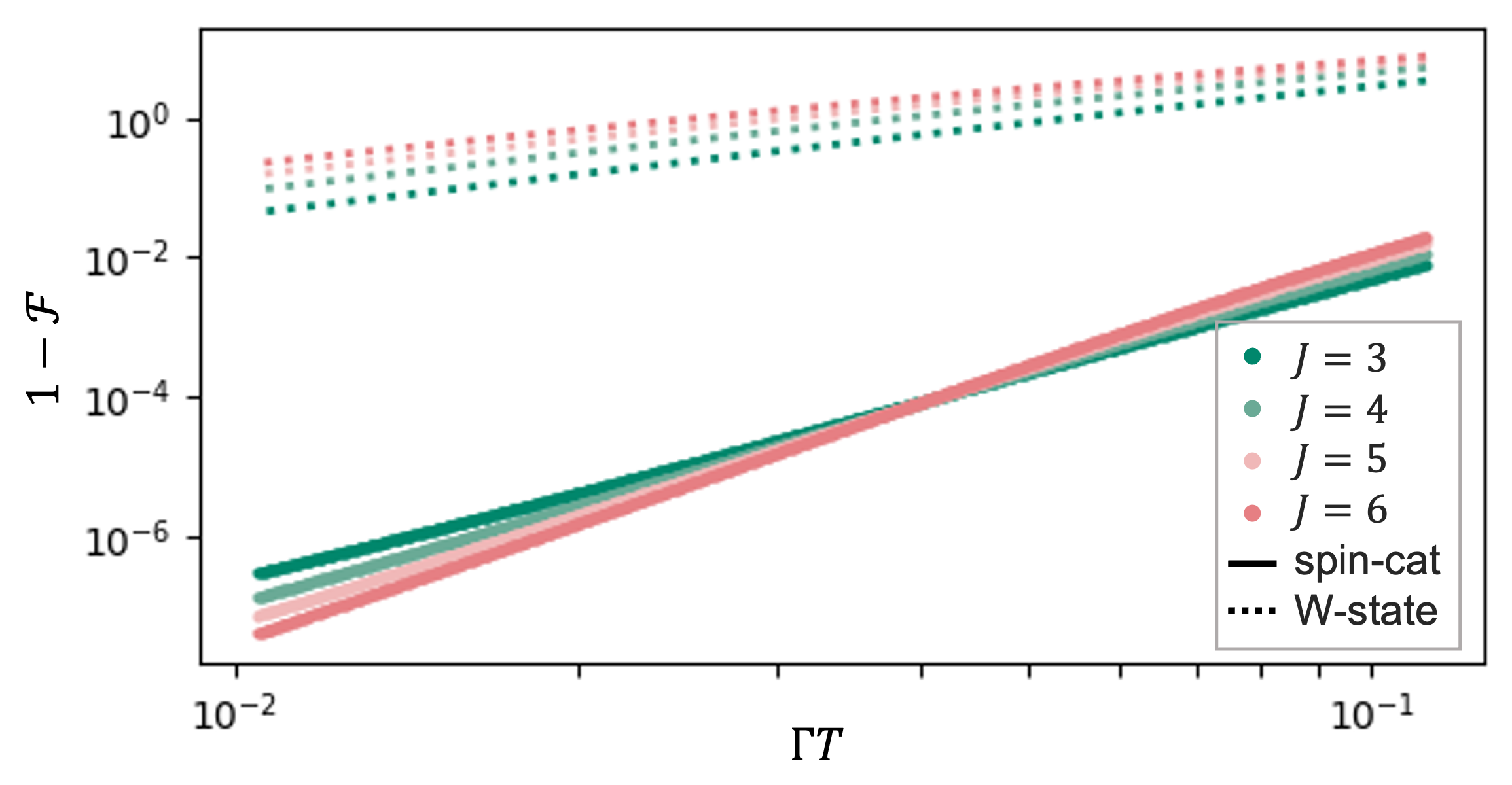}
    \caption{The multi spin-cat code has a finite code threshold given by the intersection of solid lines. For the W-state code, the infidelity~$1-\mathcal F$ increases with dimension-$(2J+1)$. Here,~$T$ is the evolution time and $\Gamma$ is the rate for $\hat J_x$ errors.}
    \label{fig:Wcat}
\end{figure}

Now we construct multi-subsystem codes with universal transversal gates. 
\begin{example}[Multi-spin codes]
    Multi-subsystem codes comprising~$D\geq5$ spin codes on each subsystem have universal transversal gates and can correct both local and correlated errors. 
\end{example}
\noindent 
To compare previous constructions for codes with universal transversal gates, such as the W-state code~\cite{FNASP20PRX}, we consider a similar multi-spin cat code with codewords
\begin{equation}
\ket {0/1}_\text L:= \frac{1}{\sqrt{n}}\left( \ket{\pm J 0\cdots0}+\cdots+ \ket{0\cdots0\pm J}\right),
\end{equation}
where the W-state code corresponds to $\ket{\pm 1}$ instead of $\ket{\pm J}$. Our numerical simulations, shown in Fig.~\ref{fig:Wcat}, confirm that the multi spin-cat code has a finite threshold, whereas the error rate for the W-state code increases with the subsystem size, performing worse than predicted by existing lower bounds~\cite{FNASP20PRX,KD21PRL}.

\begin{figure}
    \centering
    \includegraphics[width=\linewidth]{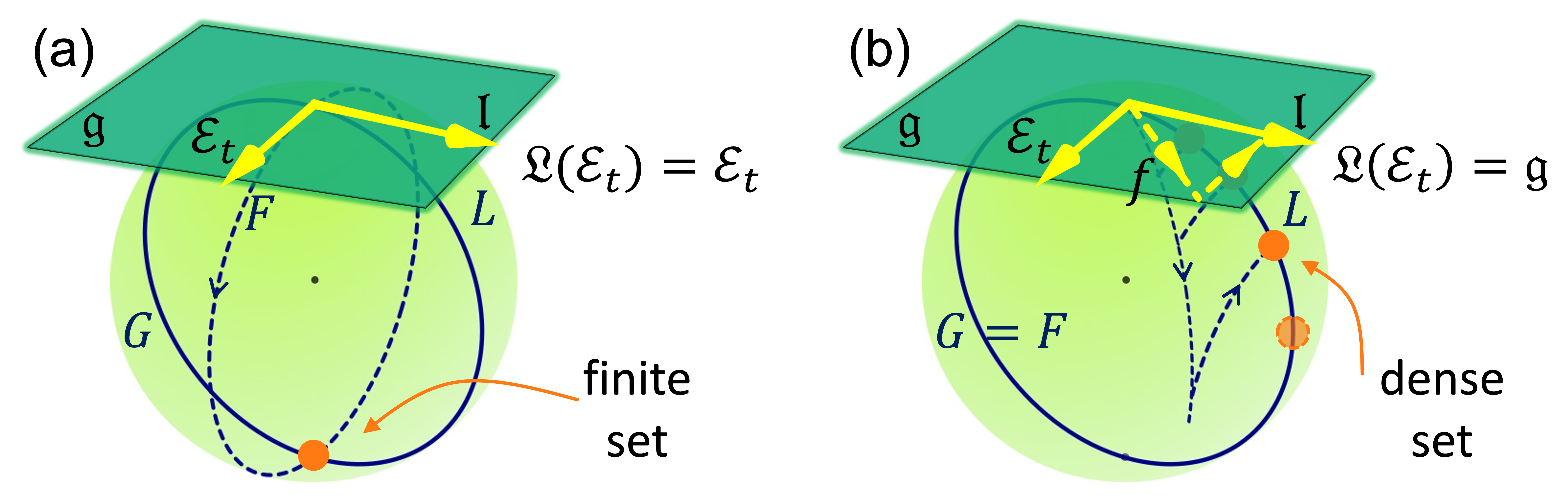}
    \caption{%
    The dark green plane represents the algebra~$\mathfrak g$,
    which forms the tangent space at some point on the pale green sphere, representing the group~$G$, which is a compact manifold.
    The `great circle'~$L$
    represents the set of logical operators,
    and the dashed loci~$F$ are the fault-tolerant gates,
    The solid yellow vectors are subspaces of the tangent space corresponding to correctable errors~$\epsilon$
    and generators~$\mathfrak l$ of~$L$. Logical fault-tolerant gates are the orange dots denoting points of intersection of~$L$ and~$F$. 
    (a)~The submanifolds~$L$ and~$F$ intersect at isolated points when~$\epsilon$ forms a Lie algebra; i.e.,~$\mathfrak L(\epsilon)= \epsilon$.
    (b)~The intersection of~$L$ and~$F$ forms a dense subset of~$L$ iff~$\mathfrak L(\epsilon)=\mathfrak g$, so,~$L\subseteq G=F$. 
}\label{fig:geometry}
\end{figure}

\paragraph{Idea for the proofs:} 
We present a Lie algebra\"ic approach to quantum error correction and consider the tangent space~$\mathfrak g$ at any point on the manifold formed by~$G$, the Lie group of unitary operators, as shown in Fig.~\ref{fig:geometry}. 
The space of generators~$\epsilon\subseteq \mathfrak g$ of correctable errors and~$\mathfrak l$ of unitary logical operators~$L$ are subspaces of~$\mathfrak g$,
depicted as one-dimensional subspaces in Fig.~\ref{fig:geometry}. 
The Lie algebra~$\mathfrak L(\epsilon)$ is the closure of~$\epsilon$ under the Lie bracket of~$\mathfrak g$.
We refer to~$\mathfrak l$ as the logical algebra of a quantum code.
Let~$F$ be the set of unitary fault-tolerant operators obtained by composing operators in~$\exp(\epsilon)$. Then the set of logical fault-tolerant operators~$L_F := F\cap L$ is a dense subset of~$L$ iff the action of~$f\in\mathfrak L(\epsilon)$ can be \emph{parallel} to~$\mathfrak l$. We formalize this intuitive geometric argument using Lie algebras and their representation.

The key ingredient in our proofs is a fundamental relation for quantum error correction that implies that the set of correctable errors and the logical algebra are `enough' to form the physical algebra for any finite-dimensional quantum code. This relation is 
mathematically stated as
\begin{equation}\label{eq:algebraicqec}
    \mathfrak g\sim \mathfrak e\otimes\mathfrak l\sim \epsilon \otimes_{\mathfrak l} \mathfrak l,
\end{equation}
where~$\mathfrak e$ is the set of correctable errors  that commutes with all logical generators. The tensor product algebra satisfies the commutation relations  of~$\mathfrak g$, and for~$E=e\otimes l$,~$E\otimes_{\mathfrak l} l' = e\otimes ll'\forall l'\in\mathfrak l$, where~$e\in \mathfrak e$ and~$l\in \mathfrak l$. 
We show that~$\mathfrak e$ and~$\mathfrak l$ are Lie algebras, and, for a representation of~$\mathfrak g$ on~$\mathscr H_N\sim\mathscr H_\text E\otimes\mathscr H_\text C
$~\cite{KL97PRA},~$\mathfrak e$  has support only on~$\mathscr H_\text E$ and~$\mathfrak l$ has support only on~$\mathscr H_\text C$. 
Thus, correctable errors that do not commute with all logical generators have a non-trivial 
representation on both~$\mathscr H_\text E$ and~$\mathscr H_\text C$ and generate fault-tolerant logical operators under the exponential map. 
If all correctable errors commute with logical generators, there are no  fault-tolerant gates.

Now we outline our argument for proving Thm.~\ref{thm:universality}.
We show that conditions for exact error correction are satisfied only when~$\mathfrak l \cap \epsilon = \mathds 1$, 
implying that $f\in\epsilon$ cannot be parallel to~$l\in\mathfrak l$.
Thus, 
if~$\mathfrak L(\epsilon)=\epsilon$, fault-tolerant operators
cannot  arbitrarily approximate logical operators close to identity, as illustrated in Fig.~\ref{fig:geometry}(a). 
The continuity condition, mathematically stated as~$\mathfrak L(\epsilon)\supsetneq \epsilon$, is required to ensure that $L_\text F$ does not form a discrete group.
For universality, the necessary and sufficient condition is~$\mathfrak l\subseteq \mathfrak L(\epsilon)$, which we show is equivalent to the universality condition~$\mathfrak L(\epsilon)=\mathfrak g$ using Eq.~\eqref{eq:algebraicqec}. Thus, satisfying the universality condition guarantees that~$L_F$ forms a dense subset of~$L$, as illustrated in Fig.~\ref{fig:geometry}(b).
For proving Cors.~\ref{cor:continuous} and~\ref{cor:transversal}, transversal gates may be viewed as local fault-tolerant operators that form a universal set if  the continuity and universality conditions are both satisfied for local errors on each subsystem.

For proving Thm.~\ref{thm:errorrate}, we estimate the probability of a logical error for some graded algebra using an \emph{algebra\"ic singleton bound} 
\begin{equation}\label{eq:singleton}
    |\mathfrak g| \geq |\mathcal E_t|^2|\mathfrak l|,
\end{equation}
where~$|\cdot|$ denotes cardinality. Here, we used Eq.~\eqref{eq:algebraicqec} and the fact that~$|\mathfrak e| 
= \operatorname{dim}(\mathscr H_\text E)^2$ with~$\operatorname{dim}(\mathscr H_\text E)\geq |\mathcal E_t|$.
Unlike the quantum singleton bound based on the no-cloning theorem~\cite{KL97PRA}, Eq.~\eqref{eq:singleton} holds  regardless of the tensor product structure of the physical Hilbert space or the type of noise.
See End Matter for detailed proofs of the main propositions.

\paragraph{Discussion:}
Our work inspires theoretical explorations based on the Lie algebra\"ic formulation of quantum error correction. One example is global symmetries in AdS/CFT correspondence, which requires that holographic codes must not have any transversal gates~\cite{h19arxiv}, contradicting the Eastin-Knill theorem that guarantees a discrete set of transversal gates~\cite{EK09PRL}. This contradiction is resolved by using our error model and coding within a physical subsystem: 
we know that a code has no transversal gates iff~$\epsilon=\mathfrak e$  for each subsystem, implying the logical algebra is an ideal of the physical algebra.
Other possible applications include topological codes~\cite{BK13PRL} and error correction for metrology~\cite{ZZPJ18NatCommun}  where local operators play an important role.

We anticipate that our results will be useful in near-term experiments for universal computation with fault-tolerance against hardware-level noise. Our code construction also enables correction of correlated noise, which has been a major source of errors for many platforms~\cite{WAK+21Nature}. Current platforms with high-dimensional control, such as  superconducting circuits~\cite{CWPB24arxiv,RSW+24arxiv}, neutral atoms~\cite{KYZ+24arxiv} and semiconducting devices~\cite{yu24arxiv} are well-suited for implementing universal computation using spin-cat codes.

\paragraph{Summary:}
Our universality condition is necessary and sufficient  for any quantum error-correcting code to  have universal fault-tolerant gates.
We have shown that the restriction of the Eastin-Knill theorem arises due to the set of local errors being closed under commutation relations, and codes with universal transversal gates can be constructed by applying our universality condition to single-subsystem codes. 
For a general error model, we have derived an upper bound~$\mathcal O(\nicefrac{1}{n\min_id_i})$ on the probability of a logical error  for codes with universal transversal gates, rather than a lower bound~$\Omega(\nicefrac{1}{n\max_i\log d_i})$ known from the approximate Eastin-Knill theorem. 
In our construction, both local errors and correlated errors can be exactly corrected, achieving exponential suppression of errors with a favorable scaling in system size.

\paragraph{Acknowledgements:}
P.G. and B.C.S. acknowledge support from Alberta Innovates and NSERC. A.M. acknowledges support of an Australian Research Council Laureate Fellowship (project no. FL240100181). 
\bibliography{main}
\section{End Matter}
Now we present the formal proofs for the main results. We begin by showing that~$\mathfrak l$  satisfies the following proposition.
\begin{lemma}\label{lem:logicalLie}
    The set of unitary logical operators is generated by a Lie algebra.
\end{lemma}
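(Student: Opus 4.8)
The plan is to show that the set of unitary logical operators $L = \{\hat U \in \mathcal U(\mathscr H_N) : \hat U \text{ acts as a valid logical unitary and preserves the codespace}\}$ is a Lie subgroup of $\mathcal U(\mathscr H_N)$, so that by the closed-subgroup theorem its generating set $\mathfrak l$ is a Lie algebra. First I would fix the representation afforded by the Knill-Laflamme structure theorem~\cite{KL97PRA}: the physical Hilbert space decomposes (on the subspace relevant to the code) as $\mathscr H_N \sim \mathscr H_{\mathrm E} \otimes \mathscr H_{\mathrm C}$, where $\mathscr H_{\mathrm C}$ carries the logical information and $\mathscr H_{\mathrm E}$ the error syndrome data. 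Relative to this decomposition, a logical operator is one of the form $\mathds 1_{\mathrm E} \otimes \hat U_{\mathrm C}$ (composed with, or restricted to, the codespace projector), i.e. $L \cong \mathcal U(\mathscr H_{\mathrm C})$, possibly quotiented by an irrelevant global phase.

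The key steps, in order, are: (1) make precise the definition of $L$ as the set of unitaries whose action on the codespace is a tensor factor acting only on $\mathscr H_{\mathrm C}$; (2) verify $L$ is closed under multiplication and inversion — immediate, since $(\mathds 1 \otimes \hat U_{\mathrm C})(\mathds 1 \otimes \hat V_{\mathrm C}) = \mathds 1 \otimes (\hat U_{\mathrm C}\hat V_{\mathrm C})$ and the identity lies in $L$, so $L$ is a group; (3) verify $L$ is topologically closed in $\mathcal U(\mathscr H_N)$ — a limit of operators of the product form is again of the product form because the set of operators acting trivially on the $\mathscr H_{\mathrm E}$ tensor factor is cut out by the closed conditions $[\,\hat U, \hat A \otimes \mathds 1\,]=0$ for all $\hat A$ together with the codespace-preservation constraint; (4) invoke Cartan's closed-subgroup theorem to conclude $L$ is an embedded Lie subgroup of the compact Lie group $\mathcal U(\mathscr H_N)$, hence $\mathfrak l := \mathrm{Lie}(L)$ is a Lie subalgebra of $\mathfrak u(N)$; (5) identify $\mathfrak l \cong \mathfrak u(\dim \mathscr H_{\mathrm C})$ (or $\mathfrak{su}$ plus phase) explicitly via the exponential map, which establishes that $L = \exp(\mathfrak l)$ is exactly generated by $\mathfrak l$ in the stated sense.

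The main obstacle I anticipate is step (1): pinning down the right definition of "logical operator" so that the closedness argument in step (3) actually goes through. If one defines logical operators only by their action on the codespace, the lift to $\mathcal U(\mathscr H_N)$ is non-unique (any completion off the codespace is allowed), and the naive set need not be closed or even a group. The fix is to work with the canonical tensor-product representative $\mathds 1_{\mathrm E}\otimes \hat U_{\mathrm C}$ on the full $\mathscr H_{\mathrm E}\otimes\mathscr H_{\mathrm C}$ block (and identity elsewhere), which the Knill-Laflamme decomposition guarantees is available; one then checks that this choice of representatives is a group and is closed. A secondary subtlety is the global-phase ambiguity — whether to take $\mathcal U$ or $\mathrm{P}\mathcal U$ — but this does not affect the conclusion that $\mathfrak l$ is a Lie algebra, since a central $\mathfrak u(1)$ summand is harmless. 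Once the representation is fixed, the remaining steps are routine applications of standard Lie theory.
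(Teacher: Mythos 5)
Your proof is correct and rests on the same pivotal ingredient as the paper's: Cartan's closed-subgroup theorem applied to the logical unitaries inside the compact group $\mathcal U(\mathscr H_N)$, yielding a Lie subalgebra $\mathfrak l\subseteq\mathfrak u(N)$. The difference is upstream: the paper simply takes $L$ to be the set of codespace-preserving logical unitaries and cites Eastin--Knill for the fact that this is a closed subgroup of $G$, whereas you manufacture closedness by replacing $L$ with the canonical Knill--Laflamme representatives $\mathds 1_{\mathrm E}\otimes \hat U_{\mathrm C}$ and identifying $L\cong\mathcal U(\mathscr H_{\mathrm C})$. That detour is not needed, and the worry driving it is unfounded: the ``naive'' set of unitaries that preserve the codespace and act as logical unitaries is itself cut out by closed conditions (preservation of a fixed finite-dimensional subspace plus the induced action), is a group, and is exactly the closed subgroup the paper and Eastin--Knill use. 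Moreover, your narrowed $L$ is a proper subset of the paper's: the logical gates appearing later (e.g.\ $e^{-i\pi \hat J_x}$ or $\widehat{SX}_\text L$, and more generally the fault-tolerant operators built from $\exp(\epsilon)$) are not of the product form $\mathds 1_{\mathrm E}\otimes\hat U_{\mathrm C}$ on all of $\mathscr H_N$, so for the downstream use of $L_F=F\cap L$ the broader definition is the right one. What your version buys is an explicit identification $\mathfrak l\cong\mathfrak u(\dim\mathscr H_{\mathrm C})$ (up to phase), which the paper does not state at this point; what it costs is generality of the set to which the lemma applies. Either way, the conclusion that $L$ is generated by a Lie subalgebra of $\mathfrak g$ stands.
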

\begin{proof}
    As the set of unitary logical operators~$L$ is a closed subgroup of~$G$~\cite{EK09PRL}, 
    then,
    using Cartan's theorem~\cite{R02Lie},~$L$ is a Lie group generated by
    \begin{equation}\label{eq:logicalalgebra}
        \mathfrak l := \{g\in\mathfrak g|\exp (-itg) \in L\, \forall t\in \mathbb R \},
    \end{equation}
    which forms a Lie subalgebra~$\mathfrak l\subseteq \mathfrak g$~\cite{R02Lie}.
\end{proof}
\noindent 
Due to an extension of Cartan's theorem~\cite{R02Lie}, any dense subset of~$L$ must also be generated by the logical algebra.

Now we present prove the relation between~$\epsilon$,~$\mathfrak l$  and~$\mathfrak g$ presented in Eq.~\eqref{eq:algebraicqec}. 
\begin{lemma}\label{lem:singleton}
    For any finite-dimensional quantum error-correcting code, the composition of the set of correctable errors and the logical algebra forms the physical algebra.
\end{lemma}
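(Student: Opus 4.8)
The plan is to deduce the decomposition $\mathfrak g\sim\mathfrak e\otimes\mathfrak l\sim\epsilon\otimes_{\mathfrak l}\mathfrak l$ from the Knill--Laflamme (KL) conditions~\cite{KL97PRA} combined with a commutant argument for matrix algebras. First I would apply the KL theorem to the code $C$ with respect to the correctable errors: from $P\hat E_a^\dagger\hat E_b P=\alpha_{ab}P$ on the code projector $P$, after diagonalising the Hermitian matrix $(\alpha_{ab})$ and discarding its kernel, one obtains a unitary identification $\mathscr H_N\sim\mathscr H_\text E\otimes\mathscr H_\text C$ in which the codespace is $\ket{0_\text E}\otimes\mathscr H_\text C$ with $\mathscr H_\text C\sim\mathscr H_K$, the independent errors carry the codespace isometrically onto mutually orthogonal syndrome sectors $\ket{a_\text E}\otimes\mathscr H_\text C$, and hence $\dim\mathscr H_\text E$ equals the number of distinguishable syndromes, which is at least $|\mathcal E_t|$. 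In this frame every correctable error lies in the associative algebra $\mathcal L(\mathscr H_\text E)\otimes\mathcal L(\mathscr H_\text C)=\mathcal L(\mathscr H_N)$, and its part commuting with all logical generators --- the paper's $\mathfrak e$ --- sits inside $\mathcal L(\mathscr H_\text E)\otimes\mathds 1_\text C$, which has real dimension $(\dim\mathscr H_\text E)^2$.

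Second, I would pin down $\mathfrak l$. By Lemma~\ref{lem:logicalLie} the unitary logical operators form a Lie group with algebra $\mathfrak l$; choosing the representatives that act trivially on the syndrome factor identifies $\mathfrak l\sim\mathds 1_\text E\otimes\mathfrak u(\mathscr H_\text C)$, so $\mathfrak l$ generates the full factor $\mathds 1_\text E\otimes\mathcal L(\mathscr H_\text C)$ and $|\mathfrak l|=(\dim\mathscr H_\text C)^2$. Since the associative $*$-algebra generated by $\mathfrak e$ is $\mathcal L(\mathscr H_\text E)\otimes\mathds 1_\text C$ and the one generated by $\mathfrak l$ is $\mathds 1_\text E\otimes\mathcal L(\mathscr H_\text C)$, the matrix-algebra identity $M_m(\mathbb C)\otimes M_k(\mathbb C)=M_{mk}(\mathbb C)$ gives that composing the two yields all of $\mathcal L(\mathscr H_N)=M_N(\mathbb C)$, whose anti-Hermitian part is $\mathfrak g=\mathfrak u(N)$; the dimension count $(\dim\mathscr H_\text E)^2(\dim\mathscr H_\text C)^2=N^2=|\mathfrak g|$ confirms $\mathfrak g\sim\mathfrak e\otimes\mathfrak l$. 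For the second isomorphism I would observe that every correctable error, being an element of $\mathcal L(\mathscr H_\text E)\otimes\mathcal L(\mathscr H_\text C)$, is a sum of simple tensors $e\otimes l$ with $e\in\mathfrak e$ and $l\in\mathfrak l$; the module product $(e\otimes l)\otimes_{\mathfrak l}l'=e\otimes ll'$ then shows that sweeping the logical factor of $\epsilon$ over all of $\mathfrak l$ reproduces $\mathfrak e\otimes\mathfrak l$, i.e. $\epsilon\otimes_{\mathfrak l}\mathfrak l\sim\mathfrak e\otimes\mathfrak l$.

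I expect the main obstacle to be making $\mathfrak e$ precise and showing that the syndrome-factor content of the correctable errors fills all of $\mathcal L(\mathscr H_\text E)$ rather than a proper subalgebra: a single error $\hat E_a$ only moves the reference sector $\ket{0_\text E}$, so recovering the full $\mathcal L(\mathscr H_\text E)$ genuinely requires composing errors with their adjoints and with logical operators --- precisely the bookkeeping encoded in ``$\otimes_{\mathfrak l}$'' --- and one has to verify that the off-reference blocks generated along the way do not obstruct the factorisation. Related care is needed for the KL degeneracies, namely diagonalising $(\alpha_{ab})$ and passing to the non-degenerate subcode, and for codes that do not tile $\mathscr H_N$ exactly, where ``$\sim$'' should be read as an isomorphism onto the $\mathfrak g$-invariant subalgebra supported on $\mathscr H_\text E\otimes\mathscr H_\text C$; I would therefore build the decomposition $\mathscr H_N\sim\mathscr H_\text E\otimes\mathscr H_\text C$ into the statement as the standing non-degeneracy hypothesis on the code, which holds for all the spin codes considered here.
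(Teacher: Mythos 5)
Your proposal follows essentially the same route as the paper's proof: the Knill--Laflamme/KLV-induced isomorphism $\mathscr H_N\sim\mathscr H_\text E\otimes\mathscr H_\text C$, the associative-algebra factorization $\mathcal L(\mathscr H_N)\sim\mathcal L(\mathscr H_\text E)\otimes\mathcal L(\mathscr H_\text C)$ with $\mathfrak e$ and $\mathfrak l$ identified with the two tensor factors, and the module product $\otimes_{\mathfrak l}$ to pass from $\mathfrak g\sim\mathfrak e\otimes\mathfrak l$ to $\epsilon\otimes_{\mathfrak l}\mathfrak l$. Your extra bookkeeping (diagonalizing $(\alpha_{ab})$, the commutant and dimension count, and flagging degeneracy and the case where the error-reachable subspace does not exhaust $\mathscr H_N$) simply makes explicit steps that the paper's terse proof leaves implicit, so the two arguments coincide in substance.
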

\begin{proof}
Given~$C:\mathscr H_K\to\mathscr H_N$ with finite~$N$, 
its representation induces an isomorphism~$\mathscr H_N\sim\mathscr H_\text E\otimes\mathscr H_\text C$~\cite{KLV00PRL}. Let~$\mathfrak e$ be the Lie algebra with representation on~$\mathscr H_\text E$ and~$\mathfrak l$ is the Lie algebra with representation on~$\mathscr H_\text C$. 
Now, using  the induced isomorphism of the  associative algebra~$\mathcal L(\mathscr H_N)\sim\mathcal L(\mathscr H_\text E)\otimes\mathcal L(\mathscr H_\text C)$, and that the subspace of self-adjoint operators of the universal enveloping algebra forms the corresponding Lie algebra,
we get~$\mathfrak g \sim \mathfrak e \otimes \mathfrak l$.
By definition, correctable errors act on~$\mathscr H_\text E$~\cite{KL97PRA} so~$\mathfrak e\subseteq \epsilon$. Any~$E\in\epsilon$ can be expressed as~$E=e\otimes l$, for some~$e\in\mathfrak e$,~$l\in \mathfrak l$, and $\otimes_\mathfrak l$ satisfies~$E\otimes_{\mathfrak l} l' = e\otimes l l'\,\forall l'\in\mathfrak l$.
Thus,~$\mathfrak g\sim \mathfrak e\otimes\mathfrak l\sim \epsilon \otimes_{\mathfrak l}\mathfrak l$.
\end{proof}
\noindent
Let~$\mathscr H_\text E = \mathscr H_\text {E0}\oplus \mathscr H_\text {E1}$, where~$\mathscr H_\text {E0}$ is the space of no error and $\mathscr H_\text {E1}$ is the space of non-trivial errors, and~$\mathscr H_N\sim \left(\mathscr H_\text {E0}\oplus \mathscr H_\text {E1}\right)\otimes \mathscr H_\text C$. Then, operators in~$\mathcal L(\mathscr H_\text {E1})\otimes \mathcal L(\mathscr H_\text C)$ are correctable errors as they act trivially on the codespace~$\mathscr H_\text {E0}\otimes \mathscr H_\text C$.
So a correctable error~$E \in \epsilon$ can also be of the form~$E = e\otimes l$ for some non-trivial~$l\in \mathfrak l$. 

Now we address Prob.~\ref{prob:universal}.
We consider  fault-tolerant operators of the form~$\hat F = \overline{EC}_s \hat F_s  \cdots \overline{EC}_1 \hat F_1$,
where~$\overline {EC}_r$ are intermediate error correction steps with~$r\in\{1,\ldots,s\}$~\cite{YTC16PRX}.
We restrict~$\hat F_r\in \exp (\epsilon)$ to ensure that intermediate errors have at most weight-$t$ up to first order.
Using the Baker-Campbell-Hausdorff formula,~$F$ forms a Lie group generated by the Lie algebra~$\mathfrak f :=\mathfrak L(\epsilon)$. 
\begin{proof}[Proof of Thm.~\ref{thm:universality}]
The statement of the theorem is expressed mathematically as~$\mathfrak L(\epsilon)= \mathfrak g\Leftrightarrow L_F$ is a universal set.
We know that~$L_F$ forms a Lie subgroup of~$L$~\cite{R02Lie}, so
operators in~$L_F$ can only finitely approximate 
an element~$\hat L\in L\setminus L_F$ as~$L_F$ is closed under composition implying that the smallest distance between~$\hat L$ and~$L_F$ cannot be infinitesimal.
In other words, using proof by contradiction,
to form a universal set,
\begin{equation}\label{eq:universality}
    L_F = L\Leftrightarrow \mathfrak l\subseteq\mathfrak f
\end{equation}
must hold.
Using~$\mathfrak f=\mathfrak L(\epsilon)$, 
the `if' condition~$\mathfrak L(\epsilon)= \mathfrak g 
\implies \mathfrak l \subseteq\mathfrak L(\epsilon)$ is true because~$\mathfrak l\subseteq \mathfrak g$ by definition in Eq.~\eqref{eq:logicalalgebra} in Lem.~\ref{lem:logicalLie}.  
To prove the `only if' condition, we first note that~$\mathfrak L (\epsilon)\subseteq \mathfrak e$ iff~$\epsilon\subseteq \mathfrak e$ (using Lem.~\ref{lem:singleton}). 
Hence,~$\mathfrak l \subseteq\mathfrak L(\epsilon)$ implies~$\exists E\in\epsilon$ of the form~$E \sim e\otimes l$ with non-trivial~$l\in\mathfrak l$, and one such term is enough to generate~$\epsilon \otimes_{\mathfrak l} \mathfrak l$ via commutation relations. Thus,~$\mathfrak l \subseteq \mathfrak L(\epsilon)\implies \epsilon \otimes_{\mathfrak l} \mathfrak l\sim\mathfrak L(\epsilon)= \mathfrak g$ (using Lem.~\ref{lem:singleton}). 
\end{proof}

Now we address Prob.~\ref{prob:transversal}. We consider transversal gates generated by~$\mathfrak f_\text T:=\bigoplus_{i=1}^n\mathfrak L(\epsilon^{(i)})$, where~$\epsilon^{(i)}$ are errors on~$\mathscr H_i$.
We denote the projector from~$\mathscr H_N\to\mathscr H_\text C$ as~$\hat P_\text C$. 
\begin{proof}[Proof of Cor.~\ref{cor:continuous}]
    The corollary can be mathematically stated as~$\mathfrak L(\epsilon^{(i)}) = \epsilon^{(i)} \implies \mathfrak l\cap \mathfrak f_\text T=\{\mathds 1\}$.
    Consider a logical generator close to the origin~$ l\in \mathfrak l\cap\epsilon$ and its corresponding logical operator close to identity~$\hat L \in L$~\cite{R02Lie}.     As~$l\in\epsilon$,~$\hat P_\text C \hat L \hat P_\text C\propto \hat P_\text C$, and as~$l\in\mathfrak l$,~$\hat P_\text C\hat L\hat P_\text C = \hat L \hat P_\text C=\mathds 1\hat P_\text C$. Thus,~$\mathfrak l\cap \epsilon =\{\mathds 1\}$,
    and~$\mathfrak L(\epsilon)=\epsilon \implies\mathfrak f\cap\mathfrak l=\mathds 1$.
\end{proof}

\begin{proof}[Proof of Cor.~\ref{cor:transversal}]
    The theorem states that the set of transversal gates is universal iff each subsystem has a non-trivial
    representation of the code that satisfies the universality condition.
    Using Eq.~\eqref{eq:universality}, the set of transversal gates is universal iff~$\mathfrak l\subseteq\mathfrak f_\text T\subseteq \bigoplus_{i=1}^n \mathfrak u(d_i)$; i.e., the logical algebra acts locally, implying that the code is represented on each subsystem.
\end{proof}
\noindent

\begin{proof}[Proof of Thm.~\ref{thm:errorrate}]
    Let~$p$ be the maximum probability of an error in~$\mathcal E_1$. 
    For a local noise model, a logical error occurs when there are~$t+1$ or more errors  on the \emph{same} subsystem. 
    Now, the probability of a weight-$(t+1)$ error is~$p^{t+1}$ and
    the probability that~$(t+1)$ errors occur on a particular subsystem is~$\left(\nicefrac{1}{n}\right)^{t+1}$  
    for a local noise model. 
    We use Eq.~\eqref{eq:singleton} to relate~$t$ to the subsystem sizes:~$|\mathcal E_1^{(i)}|^{2(t+1)}\geq \nicefrac{d_i^2}{K^2}  \geq |\mathcal E_t^{(i)}|^2$, 
    giving~$t+1\geq \log \min_id_i$; 
    hence, the probability of a logical error~$p_{\text L} \leq n\left(\nicefrac{1}{n}\right)^{t+1}p^{t+1}\in\mathcal O(\nicefrac{1}{n\min_id_i})$. 
    It is important to note that, according to Thm.~\ref{thm:universality}, codes with  universal transversal gates must satisfy the universality condition, which is not possible with only local errors. 
    Thus, the set of correctable errors must include correlated errors, and, under correlated noise,~$p_\text L\leq p^{t+1}\in\mathcal O(\nicefrac{1}{\min_id_i})$.
\end{proof}

\end{document}